\newcommand{\RR}{\mathbb{R}}
\newcommand{\NN}{\mathbb{N}}
\newcommand{\OPT}{\operatorname{OPT}}
\newcommand{\cost}{\operatorname{cost}}
\newtheorem{theorem}{Theorem}
\newtheorem{lemma}[theorem]{Lemma}
\newtheorem{definition}[theorem]{Definition}
\newtheorem{claim}[theorem]{Claim}
\newtheorem{result}{Result}
\begin{document}

\title{A Subquadratic Time Approximation Algorithm for Individually Fair k-Center}

\author{
Matthijs Ebbens \thanks{Department of Mathematics and Computer Science, University of Cologne, Germany. Email: \texttt{ymebbens@gmail.com}.}
\and
Nicole Funk \thanks{Department of Mathematics and Computer Science, University of Cologne, Germany. Email: \texttt{funk@cs.uni-koeln.de}.}
\and 
Jan Höckendorff\thanks{Department of Mathematics and Computer Science, University of Cologne, Germany. Email: \texttt{hoeckendorff@cs.uni-koeln.de}. Funded by the Deutsche Forschungsgemeinschaft (DFG, German Research Foundation) – Project Number 459420781.}
\and Christian Sohler \thanks{Department of Mathematics and Computer Science, University of Cologne, Germany. Email: \texttt{sohler@cs.uni-koeln.de}.}
\and Vera Weil \thanks{Department of Mathematics and Computer Science, University of Cologne, Germany. Email: \texttt{weil@cs.uni-koeln.de}.}
}

\date{}

\maketitle

We study the $k$-center problem in the context of individual fairness.
Let $P$ be a set of $n$  points in a metric space and $r_x$ be the distance between $x \in P$ and its $\lceil n/k \rceil$-th nearest neighbor. The problem asks to optimize the $k$-center objective under the constraint that, for every point $x$, there is a center within distance $r_x$. 
We give bicriteria $(\beta,\gamma)$-approximation algorithms that compute clusterings such that every point $x \in P$ has a center within distance $\beta r_x$ and the clustering cost is at most $\gamma$ times the optimal cost.
Our main contributions are a deterministic $O(n^2+ kn \log n)$ 
time $(2,2)$-approximation algorithm and a randomized $O(nk\log(n/\delta)+k^2/\varepsilon)$ time $(10,2+\varepsilon)$-approximation algorithm, where $\delta$ denotes the failure probability. For the latter, we develop a randomized sampling procedure to compute constant factor approximations for the values $r_x$ for all $x\in P$ in subquadratic time; we believe this procedure to be of independent interest within the context of individual fairness.

\section{INTRODUCTION}\label{sec: introduction}

Clustering is the process of partitioning a given data set into subsets such that, ideally, the elements in each subset are similar to each other and elements in different subsets are less similar. It is a basic approach in unsupervised learning with applications in, for example,
signal processing \cite{lloyd1982}, community detection in networks \cite{zhang2007}\cite{bota2015}, outlier detection \cite{chawla2013} and  data summarization \cite{kleindessner2019fair}. 

For many clustering problems each cluster has a corresponding center that can be viewed as a representative of all points contained in the given cluster. Data points closer to the center are naturally better represented than those that are far away. Ideally, we would like every data point to be close to its representative center, but this is usually impossible to realize. One way to mitigate this is to use clustering objectives such as $k$-median or $k$-means clustering that optimize the average (squared) distance to the cluster center. But optimizing the average distance will often lead to some data points being less well represented than they could be. In applications where data points model humans, such a behaviour may be undesirable and raises the question how to obtain a clustering that is acceptable for everyone. This problem is addressed in a recent paper by Jung et al. \cite{jung2020service} that introduced the notion of individual fairness in the context of clustering. 

Individual fairness can be thought of as a clustering constraint that
ensures that every data point is assigned to a center that is not much worse than what one could hope for in a certain average-type setting:
If one partitions a data set with $n$ points in $k$ clusters then 
on average a cluster has $n/k$ data points \cite{jung2020service}. Thus,  individual fairness is
defined to be the constraint that every data point has a center among its $\lceil n/k \rceil$-th nearest neighbors.

Individually fair clustering was mainly studied for $k$-median and $k$-means clustering in metric spaces and generalizations to $l_p$-norm type distance functions
\cite{mahabadi2020individual} \cite{negahbani2021better}
\cite{vakilian2022improved}
\cite{bateni2024scalable}. 
This also led to results on the individually fair $k$-center problem, which we will call the $\alpha$-fair $k$-center problem.

In this paper, we show that the properties of the $\alpha$-fair $k$-center problem can be exploited to achieve a simple algorithm that improves upon the current state of the art. 
We then give another algorithm that achieves a randomized constant bicriteria approximation in $O(kn \log (n/\delta))$ time where $\delta>0$ is a parameter that bounds the failure probability. To do so, we develop a procedure to approximate the distances to the $\lceil n/k \rceil$-th nearest neighbors of each data point in that running time.
We believe that this procedure may find further applications in the context of individual fairness.

\section{OUR CONTRIBUTIONS}
In this work, we give a $O(n^2 + kn \log n)$ time $(2,2)$-bicriteria approximation algorithm for the individually $\alpha$-fair metric $k$-center problem, improving upon the current state of the art  $(3,3)$-approximation algorithm by Vakilian and Yal\c{c}iner \cite{vakilian2022improved}.
\begin{result} [Simplified restatement of Theorem \ref{thm: result discrete}]
    Algorithm $(2,2)$-ApproxFairCenter is a $(2,2)$-bicriteria approximation algorithm for the individually $\alpha$-fair (discrete) metric $k$-center problem 
    with
    $O(n^2 + kn \log n)$ running time, where $n$ is the number of input points.
\end{result}

As the running time of this algorithm is dominated by the computation of the $\lceil n/k \rceil$-th nearest neighbors of each data point, we develop a randomized approximation algorithm for these distances under the assumption that $k$ is smaller than $n/6$ (if this is not the case we can compute all distances to the $\lceil n/k\rceil$-th nearest neighbors exactly in $O(nk)$ time).

\begin{result}[Simplified restatement of Lemma \ref{lem: approx Fair Radii}]
Let $(P,d)$ be a finite metric space.
 Algorithm ApproxFairRadii computes with probability at least $1-\delta$ for all
$p\in P$ a $5$-approximation to the distance of the $\lceil n/k\rceil$-th nearest neighbor. The running time of the algorithm is $O(nk \log(n/\delta))$. 
\end{result}

Using the approximated distances, we modify the $(2,2)$-approximation algorithm to obtain a randomized approximation algorithm with subquadratic running time if $k$ is asymptotically smaller than $n$.

\begin{result}[Simplified restatement of Theorem \ref{thm: subquadratic time}]
Let $\delta, \varepsilon>0$.
Algorithm $(10,2+\epsilon)$-ApproxFairCenter is a randomized 
algorithm that computes with probability at least $1-\delta$ a $(10,2+\epsilon)$-bicriteria approximation for the
individually $\alpha$-fair (discrete) metric $k$-center problem.
The running time of the algorithm is
$O(nk (\log(n/\delta)) +k^2/\varepsilon)$, where $n$ is the number of input points.
\end{result}

\section{RELATED WORK}\label{sec: related work}

For the $k$-center problem there is a $O(nk)$ time $2$-approximation by Gonzalez ~\cite{gonzalez1985clustering} and it is known that the problem is NP-complete to approximate with a factor less than $2$~\cite{hochbaum1985best}.

The notion of individual fairness we consider in this paper was introduced by Jung et al. \cite{jung2020service} for $k$-clustering in general, but also studied previously in other contexts such as priority clustering \cite{plesnik1987heuristic} and metric embedding \cite{chan2006spanners}\cite{charikar2010local}. In fact, individually fair clustering is a special case of priority clustering \cite{bajpai2021revisiting}. In the context of individual fairness, a clustering of size $k$  is \emph{$\alpha$-fair} for a set of points $P$ and some $\alpha > 0$ if for each point $p$ in $P$ there is a center within $\alpha$ times the distance to its $ \lceil|P|/k\rceil$-th nearest neighbor in $P$. Jung et al. \cite{jung2020service} showed that 
solving $k$-clustering under individual fairness with $\alpha = 1$ is NP-hard in any metric space. They additionally designed an algorithm that computes a $k$-center clustering that is $2$-fair but lacks any guarantee on the cost.
In the context of individually fair $k$-clustering, previous work studied algorithms for arbitrary $\ell_p$ norm cost functions including $k$-center, $k$-median and $k$-means as special cases \cite{mahabadi2020individual} \cite{negahbani2021better} \cite{vakilian2022improved}
\cite{bateni2024scalable}. We start by stating the results for the case of $k$-center. The first result with guarantees on both cost and fairness was by Mahabadi and Vakilian \cite{mahabadi2020individual}, who developed a $\tilde{O}(k^5n^5)$-time local search $(O(\log n),7)$-approximation algorithm that computes a $7\alpha$-fair solution with cost at most $O(\log n)$ times the optimal cost, for the case $\alpha \geq 1$.
Further work was made by Negahbani and  Chakrabarty \cite{negahbani2021better}  utilizing LP rounding to develop a $\tilde{O}(kn^4)$-time $(2+\varepsilon,8)$-approximation for $\alpha = 1$ and $\varepsilon > 0$. The most recent result with respect to bicriteria approximation is by Vakilian and Yal\c{c}iner \cite{vakilian2022improved}, who proposed a $\tilde{O}(n^2)$-time $(3+\varepsilon,3)$-approximation for any $\alpha \geq 1$ and $\varepsilon > 0$. Han et al. \cite{han2023approximation} consider a relaxed version of individual fair $k$-center by allowing outliers and propose a $4$-approximation for this variant. 
Next we discuss the results for other $\ell_p$-norms.
The algorithm of Mahabadi and Vakilian \cite{mahabadi2020individual} achieves a $(84,7)$-approximation algorithm for $\alpha$-fair $k$-median. Negahbani and Chakrabarty \cite{negahbani2021better} designed a $(2^{p+2},8)$-approximation, which yields cost approximations $8$ for $k$-median ($p = 1)$ and $16$ for $k$-means $(p = 2)$ objective functions. 
Further results by Vakilian and Yal\c{c}iner \cite{vakilian2022improved} 
are a polynomial time $(16^p+\varepsilon,3)$-approximation for general $p$ and a $(7.081 + \varepsilon,3)$-approximation for $k$-median. Finally, for $\alpha = 1$, Bateni et al. \cite{bateni2024scalable} proposed an $\tilde{O}(nk^2)$ time $(O(1),6)$-approximation algorithm for $k$-median and $k$-means. 
Additional studies include coresets for individually fair $k$-clustering by Chhaya et al. \cite{chhaya2022coresets} and alternative notions to individual fairness like a feature based approach by Kar et al. \cite{kar2021feature} and an approach through similarity sets by Chakrabarti et al. \cite{chakrabarti2022new}. 

\section{PRELIMINARIES}

In this section, we will establish the necessary notation and introduce the $k$-center problem with individual fairness.
Throughout this work $(P,d)$ will denote an arbitrary discrete metric space with $n = |P|$. $P$ will be given as input and we assume that one can compute $d(x,y)$ in constant time for any $x,y \in P$. For a set $C \subseteq P$ and some point $p \in P$ we write $d(p,C) := \min_{c \in C}d(p,c)$ and set $d(p,\emptyset):=\infty$.
We denote the cost of a set $C \subseteq P $ by $
\text{cost}(P,C) := \max_{p \in P} d(p,C)$.
The definition of the unconstrained $k$-center problem is then as follows.
\begin{definition}[discrete metric $k$-center problem]
Given a discrete metric space $(P,d)$ and a target number of clusters $k \in \NN$, find a set $C \subseteq P$ of size $k$ such that $\text{cost}(P,C)$ is minimized.
\end{definition}

The individual fairness constraint is modelled in the following way
\cite{jung2020service}.

\begin{definition}(fairness radius)
Let $k\in \NN$, $p \in P$ and let $B(p,r) := \{q \in P \mid \ d(p,q) \leq r\}$ be the ball centered at $p$ with radius $r \in \RR_{\geq 0}$. The \emph{$k$-fair radius} of $p$ is defined as 
\[
r_{k}(p) := \inf \{r \in \RR_{\geq 0}  \mid \ \vert B(p,r) \vert \geq n/k\}.
\]
\end{definition}

\begin{definition} (individually fair centers)
For $\alpha \in \RR_{> 0}$ and $k \in \NN$, a set $C \subseteq P$ of centers is called  \emph{$(\alpha,k)$-fair}, if 
$d(p,C) \leq \alpha  r_k(p)$ for all $p \in P$.
If in addition $|C|\le k$, then $C$ is called a
feasible solution for the  $\alpha$-fair $k$-center problem.
\end{definition}

When the parameters $\alpha$ and $k$ are clear from the context we call a feasible solution to the $\alpha$-fair $k$-center problem also simply a \emph{feasible solution}.

\begin{definition}[\emph{$\alpha$-fair $k$-center problem}]
Given a discrete metric space $(P,d)$, a target number of clusters $k \in \NN$, 
and a fairness parameter $\alpha \in \RR_{>0}$,
find an $(\alpha,k)$-fair set $C \subseteq P$ of size $k$ that minimizes
$
\text{cost}(P,C).
$
\end{definition}
Note that there does not necessarily exist a feasible solution for all choices of $\alpha>0$ in the $\alpha$-fair $k$-center problem \cite{jung2020service}. In this case, we define the cost of an optimal solution as infinity. However, it is known that for $\alpha\ge 2$ there is always a feasible solution \cite{jung2020service}. We will consider the following bi-criteria approximation.

\begin{definition}[\emph{($\beta$, $\gamma$)-approximation for $\alpha$-fair $k$-center}]
 Given $k \in \NN$, $\alpha \in \RR_{>0}$
 and let $\Delta^*$ be the cost of an optimal solution for the $\alpha$-fair $k$-center problem.
 A set $C \subseteq P$ of size $k$ is a \mbox{($\beta$, $\gamma$)}-approximation for the $\alpha$-fair $k$-center problem, if the following two conditions hold:
 \begin{enumerate}
     \item $cost(P,C) \leq \beta \Delta^*$,
     \item $C$ is $(\gamma \alpha,k)$-fair.
 \end{enumerate}
\end{definition}

Note that, since $\alpha$ is part of the input, one can reduce the $k$-center problem to the $\alpha$-fair $k$-center problem, by choosing a suitable $\alpha$ for given input points s.t. any set of $k$ centers is $(\alpha,k)$-fair. By the hardness results on individual fairness \cite{jung2020service} and approximability of $k$-center \cite{hochbaum1985best}, a $(2,2)$-approximation algorithm for $\alpha$-fair $k$-center is the best one can hope for in polynomial time unless P equals NP.

\section{$(2,2)$-APPROXIMATION ALGORITHM}\label{sec: algorithm}

In this section, we present and analyze our deterministic bicriteria approximation algorithm for the $\alpha$-fair $k$-center problem. 
The first algorithm for individually fair clustering by Jung et al. \cite{jung2020service} computes a $(2,k)$-fair solution by considering points $p \in P$ in order, non-decreasingly sorted by their fairness radii $r_k(p)$, and choosing $p$ as center, if 
$B(p, r_k(p))$ does not intersect the ball $B(q, r_k(q))$, i.e 
$d(p,q) > r_k(p) + r_k(q)$, for all previously chosen centers $q \in P$.

This approach is equivalent to the algorithm for finding
a $(k/n)$-density net by Chan et al. \cite{chan2006spanners}, the metric embedding algorithm of Charikar et al. \cite{charikar2010local} and the $2$-approximation algorithm for priority $k$-center by Plesník \cite{plesnik1987heuristic}\cite{bajpai2021revisiting}.
Vakilian and Yal\c{c}iner  \cite{vakilian2022improved} later changed the condition of choosing $p$ as a center to $d(p,q)$ having to be at least $2 \alpha r_k(p) \geq 2\alpha (r_k(p) + r_k(q))$ for all previously chosen centers $q \in P$.

We adapt this approach by guessing an upper bound $\Delta$ on the cost of an optimal solution and modify the condition to $d(p,q) \geq 2\min \{ \alpha r_k(p),  \Delta \}$, as done in Algorithm $\ref{alg:FairCenter}$.

Besides the fairness parameter $\alpha > 0$ and the number of centers $k$, Algorithm \ref{alg:FairCenter} also receives a cost value $\Delta \in \RR$ and function $r: P\rightarrow \RR_{\geq 0}$ as input. The algorithm is guaranteed to always compute a $(2\alpha,k)$-fair set $C\subseteq P$ with cost at most $2\Delta$ if $r = r_k$. If there exists a feasible solution for $\alpha$-fair $k$-center with cost at most $\Delta$, then it
additionally holds that $|C| \leq k$. 
Assume that we know the optimal value $\Delta^*$ for the $\alpha$-fair $k$-center problem, then we can apply Algorithm \ref{alg:FairCenter} to obtain a $(2,2)$-approximation. To find the value $\Delta^*$ we exploit the fact that there always exist two input points $p,q\in P$
with $d(p,q)= \Delta^*$.

\IncMargin{1.5em}
\begin{algorithm}
\DontPrintSemicolon
\caption{FairCenter\label{alg:FairCenter}}
\KwIn{Number of centers $k \in \NN$, fairness parameter $\alpha \in \RR_{> 0}$, cost value $\Delta \in \RR$, function $r: P \rightarrow \RR_{\geq 0}$}
\KwOut{Set $C\subseteq P$}
Let $P =\{p_1,\dots,p_n \}$ be sorted s.t. $r(p_1) \leq r(p_2) \leq \dots \leq r(p_n)$
    
$C := \emptyset$ \;

\For{$i := 1$ to $n$}{
    \If{$d(p_i,C) > 2\min \{ \alpha r(p_i),  \Delta \}$}
    {
        $C := C \cup \{p_i\}$
    }
}
\Return $C$ 
\end{algorithm}
\DecMargin{1.5em}

\begin{lemma}\label{bound number centers}
Let  $k\in \NN$, $\alpha \in \RR_{> 0}$, $\Delta \in \RR_{>0}$ and $r: P \rightarrow \RR$ s.t. $r(p) \geq r_k(p)$ for all $p \in P$.
If there exists a feasible solution for $\alpha$-fair $k$-center with cost at most $\Delta$, then FairCenter$(k,\alpha,\Delta,r)$ returns a set $C$ of size at most $k$.
\end{lemma}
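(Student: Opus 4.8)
The plan is to fix a feasible solution $C^*$ for $\alpha$-fair $k$-center with $\cost(P,C^*)\le\Delta$ and $|C^*|\le k$, and to exhibit an injection $\phi\colon C\to C^*$; since $|C^*|\le k$ this immediately gives $|C|\le k$. For each $c\in C$ let $\phi(c)$ be a point of $C^*$ nearest to $c$, breaking ties arbitrarily. The first, preliminary step is to record the ``served within'' bound: because $C^*$ is $(\alpha,k)$-fair we have $d(p,C^*)\le\alpha r_k(p)\le\alpha r(p)$ for every $p\in P$ (this is the one place the hypothesis $r\ge r_k$ is used), and because $\cost(P,C^*)\le\Delta$ we also have $d(p,C^*)\le\Delta$; hence $d(p,\phi(p))=d(p,C^*)\le\min\{\alpha r(p),\Delta\}$ for every $p\in C$.

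The core step is to show $\phi$ is injective. Suppose not, so there are distinct $c=p_i$ and $c'=p_j$ in $C$ with $i<j$ and $\phi(p_i)=\phi(p_j)=:c^*$. Since the algorithm sorts $P$ by $r$, we have $r(p_i)\le r(p_j)$, and therefore $\min\{\alpha r(p_i),\Delta\}\le\min\{\alpha r(p_j),\Delta\}$. Combining the triangle inequality with the bound from the previous paragraph,
\[
d(p_i,p_j)\le d(p_i,c^*)+d(c^*,p_j)\le\min\{\alpha r(p_i),\Delta\}+\min\{\alpha r(p_j),\Delta\}\le 2\min\{\alpha r(p_j),\Delta\}.
\]
On the other hand, $p_i$ was added to $C$ at iteration $i<j$, so $p_i\in C$ when $p_j$ is examined; for $p_j$ to be added, the algorithm requires $d(p_j,C)>2\min\{\alpha r(p_j),\Delta\}$, and in particular $d(p_i,p_j)\ge d(p_j,C)>2\min\{\alpha r(p_j),\Delta\}$, contradicting the displayed inequality. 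Hence $\phi$ is injective and $|C|\le|C^*|\le k$.

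I expect the only real subtlety — and the step I would be most careful with — is handling the $\min$ with $\Delta$ consistently: the inequality $\min\{\alpha r(p_i),\Delta\}\le\min\{\alpha r(p_j),\Delta\}$ inherited from $r(p_i)\le r(p_j)$ is exactly what makes the two one-sided distance bounds collapse to $2\min\{\alpha r(p_j),\Delta\}$, which matches precisely the threshold tested when $p_j$ is considered. Beyond that, one should note the trivial edge cases ($C^*\neq\emptyset$ so that $\phi$ is well defined whenever $C\neq\emptyset$, while $C=\emptyset$ needs no argument), after which the counting is immediate.
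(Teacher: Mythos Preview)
Your proof is correct and follows essentially the same approach as the paper's. The paper phrases the argument via pairwise disjoint balls $B(c_i,\min\{\alpha r(c_i),\Delta\})$ each containing a point of $C^*$, whereas you phrase it as an injection $\phi\colon C\to C^*$ given by the nearest $C^*$-center; but the key inequalities (the bound $d(c,C^*)\le\min\{\alpha r(c),\Delta\}$, the monotonicity $\min\{\alpha r(p_i),\Delta\}\le\min\{\alpha r(p_j),\Delta\}$ for $i<j$, and the contradiction with the algorithm's threshold) are identical in both.
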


\begin{proof}
Let $(P,d)$ be the input instance and assume that there exists
a feasible solution $C^*$ for  the $\alpha$-fair $k$-center problem with cost at most $\Delta$ and let $C = \{c_1,\dots, c_\ell\}$ be the set returned by FairCenter$(k,\alpha,\Delta, r)$.
In order to prove the lemma, we will construct a set of $\ell$ disjoint balls (one for each center) such that every ball contains at least one center from $C^*$. Since $|C^*|\le k$ this will imply the result.

Assume that the centers $c_1,\dots,c_\ell$ are ordered increasingly by the time at which they were included in $C$. Note that, since $\alpha > 0$ and $P$ is sorted non-decreasingly with respect to $r$, for $1\leq i \leq j \leq \ell$, $\alpha r(c_i) \leq \alpha r(c_j)$ holds. Now define $r'(c_i):= \min\{\alpha r(c_i), \Delta \}$.
Since $c_1,\dots,c_\ell$ were included in $C$ by Line 4 of Algorithm  \ref{alg:FairCenter}, they satisfy 
$$
d(c_j,c_i) > 2 r'(c_j)
\geq  r'(c_i) +   r'(c_j)
$$
for $1 \leq i < j \leq \ell$,
which implies that
$$
B(c_i,r'(c_i)) \cap B(c_j, r'(c_j)) = \emptyset.
$$
We now observe that $C^*$ must have at least one center in each of these balls. Indeed, assume that $B(c_i, r'(c_i)) \cap C^*$ is empty for some $c_i \in C$. Then $d(c_i,C^*) > r'(c_i) = \min\{\alpha r(c_i),\Delta\} \geq \min\{\alpha r_k(c_i),\Delta\}$. Thus, $c_i$ either violates the fairness constraint of $C^*$ or the solution $C^*$ has cost more than $\Delta$, which is a contradiction.
Since the balls are disjoint and each ball contains at least one point of $C^*$, we conclude that $|C|\le k$.
\end{proof}

\begin{lemma}\label{bicriteria approximation}
Let  $k\in \NN,\alpha \in \RR_{> 0}$ and $\Delta\in\RR_{>0}$.
If there exists a feasible solution for $\alpha$-fair $k$-center with cost at most $\Delta$, then {FairCenter}$(k,\alpha,\Delta,r_k)$ returns a $(2\alpha,k)$-fair set of at most $k$ points with cost at most $2\Delta$.
\end{lemma}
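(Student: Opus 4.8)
The plan is to combine Lemma~\ref{bound number centers} with a short direct argument bounding $d(p,C)$ for every point $p$. Applying Lemma~\ref{bound number centers} with the choice $r = r_k$, which trivially satisfies $r(p)\ge r_k(p)$ for all $p$, immediately gives that, under the hypothesis that a feasible solution of cost at most $\Delta$ exists, the set $C$ returned by FairCenter$(k,\alpha,\Delta,r_k)$ has $|C|\le k$. So it remains to establish the two distance guarantees: that $C$ is $(2\alpha,k)$-fair and that $\cost(P,C)\le 2\Delta$.

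Both of these will follow from the single claim that $d(p,C)\le 2\min\{\alpha r_k(p),\Delta\}$ for every $p\in P$. To prove the claim, fix $p = p_i$ in the sorted order used by the algorithm and let $C_i$ denote the value of the variable $C$ just before iteration $i$ of the for-loop. If the algorithm adds $p_i$ to $C$ in that iteration, then $p_i\in C$ and $d(p_i,C)=0$, so the bound is immediate. Otherwise the test in Line~4 failed for $p_i$, which is exactly the statement $d(p_i,C_i)\le 2\min\{\alpha r_k(p_i),\Delta\}$. The key observation is that the algorithm only ever adds points to $C$ and never removes any, so $C_i\subseteq C$ and hence $d(p_i,C)\le d(p_i,C_i)\le 2\min\{\alpha r_k(p_i),\Delta\}$, as claimed.

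With the claim in hand the lemma follows: since $\min\{\alpha r_k(p),\Delta\}\le \alpha r_k(p)$ for all $p$, we obtain $d(p,C)\le 2\alpha r_k(p)$, i.e.\ $C$ is $(2\alpha,k)$-fair; and since $\min\{\alpha r_k(p),\Delta\}\le \Delta$ for all $p$, we obtain $\cost(P,C)=\max_{p\in P} d(p,C)\le 2\Delta$. Combined with $|C|\le k$ from Lemma~\ref{bound number centers}, this is precisely the assertion of the lemma.

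The argument is essentially routine; the only points requiring care are the monotonicity of $C$ over the course of the execution — so that failing the Line~4 test at iteration $i$ certifies closeness to the \emph{final} center set rather than merely to a prefix — and the elementary splitting of the single bound via $\min\{a,b\}\le a$ and $\min\{a,b\}\le b$ into the separate fairness and cost statements. No real obstacle arises here: the substantive work, namely the bound $|C|\le k$, has already been done in Lemma~\ref{bound number centers}.
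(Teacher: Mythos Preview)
Your proof is correct and follows the same approach as the paper: invoke Lemma~\ref{bound number centers} for $|C|\le k$, then read off the bound $d(p,C)\le 2\min\{\alpha r_k(p),\Delta\}$ from the Line~4 test to obtain both the fairness and cost guarantees. The paper's proof is just a terser version of yours, asserting the Line~4 consequence in one sentence without spelling out the two cases or the monotonicity of $C$.
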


\begin{proof}
Let $C$ be the set returned by FairCenter$(k,\alpha,\Delta,r_k)$. By Lemma \ref{bound number centers} we know that $|C| \leq k$. The if-clause in Line 4 ensures that $d(p,C) \leq 2  \min \{\alpha r_k(p), \Delta\} \leq 2 \Delta$ for all $p \in P$, which concludes the proof. 
\end{proof}

\IncMargin{1.5em}
\begin{algorithm}
\DontPrintSemicolon
\caption{$(2,2)$-ApproxFairCenter\label{alg:(2,2)-approx}}
\KwIn{Number of centers $k \in \NN$, fairness parameter $\alpha \in \RR_{> 0}$}
\KwOut{Subset of $P$ of size at most $k$, or failure message}
Let $D$ be set of pairwise distances in $P$

Compute $r_k(p)$ for $p\in P$ 

Do a binary search over $D$ to find smallest $\Delta$ s.t.
$|FairCenter(k,\alpha,\Delta,r_k)| \leq k$ 

\If{$\Delta$ does not exist}{
\Return \emph {no feasible solution exists}
}
\Else{
\Return $FairCenter(k,\alpha,\Delta, r_k)$
}

\end{algorithm}
\DecMargin{1.5em}

\begin{theorem}\label{thm: result discrete}

Let $k\in \NN$ and $\alpha \in \RR_{> 0}$.
If there exists a feasible solution for  $\alpha$-fair $k$-center, Algorithm \ref{alg:(2,2)-approx} computes a $(2,2)$-approximation for $\alpha$-fair $k$-center in time $O(n^2 + kn\log n)$ .

\end{theorem}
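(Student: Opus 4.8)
The plan is to show that the binary search in Algorithm \ref{alg:(2,2)-approx} returns a guessed cost $\Delta$ with $\Delta\le\Delta^*$ for which $\mathrm{FairCenter}$ outputs at most $k$ centers, and then to read both approximation guarantees directly off Line 4 of Algorithm \ref{alg:FairCenter}. The first fact I would record is that $\Delta^*$ lies among the searched values: an optimal $(\alpha,k)$-fair set $C^*$ exists because $P$ has finitely many subsets, and $\Delta^*=\cost(P,C^*)=\max_{p\in P}\min_{c\in C^*}d(p,c)=d(p_0,c_0)$ for some $p_0\in P$ and $c_0\in C^*\subseteq P$, so $\Delta^*\in D$ and $\Delta^*<\infty$.

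The core step is the analysis of the binary search. For every $\Delta\in D$ with $\Delta\ge\Delta^*$, the optimal $C^*$ is a feasible solution of cost at most $\Delta$, so Lemma \ref{bound number centers} (with $r=r_k$) gives $|\mathrm{FairCenter}(k,\alpha,\Delta,r_k)|\le k$; hence the test in Line 3 succeeds at $\Delta^*$, the ``no feasible solution'' branch is never taken, and --- the point I would stress --- \emph{every} $\Delta\in D$ at which the test \emph{fails} satisfies $\Delta<\Delta^*$. From this I would conclude, without needing monotonicity of $\Delta\mapsto|\mathrm{FairCenter}(k,\alpha,\Delta,r_k)|$, that the search returns some $\Delta\le\Delta^*$ passing the test: run it keeping the upper endpoint at an index whose value passes the test (initially $\max D\ge\Delta^*$, which passes) and only advancing the lower endpoint past values that fail (all of which are $<\Delta^*$); then the lower endpoint never exceeds $\Delta^*$, and at termination it equals the upper endpoint and thus gives a valid $\Delta\le\Delta^*$. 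For this $\Delta$, the returned $C:=\mathrm{FairCenter}(k,\alpha,\Delta,r_k)$ has $|C|\le k$ by the test, while the if-condition in Line 4 of Algorithm \ref{alg:FairCenter} (as in the proof of Lemma \ref{bicriteria approximation}) gives $d(p,C)\le 2\min\{\alpha r_k(p),\Delta\}$ for all $p\in P$. Hence $\cost(P,C)\le 2\Delta\le 2\Delta^*$ and $C$ is $(2\alpha,k)$-fair; padding $C$ to size exactly $k$ with arbitrary points hurts neither bound, so $C$ is a $(2,2)$-approximation.

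For the running time I would account for: $O(n^2)$ to compute all pairwise distances and hence $D$; $O(n^2)$ to compute every $r_k(p)$, each being the $\lceil n/k\rceil$-th smallest distance from $p$, by linear-time selection; $O(n\log n)$ to sort $P$ by $r_k$ once; and the binary search, which makes $O(\log|D|)=O(\log n)$ calls to $\mathrm{FairCenter}$, each run on the pre-sorted $P$ and aborted once it has collected $k+1$ centers (all the test needs), hence costing $O(nk)$, for $O(nk\log n)$ in total. Since one cannot afford to sort the $\Theta(n^2)$ distances, I would navigate the binary search by repeated linear-time median selection that simultaneously partitions the current candidate set and discards the irrelevant half each round, so the selection work telescopes to $O(|D|)=O(n^2)$. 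Altogether this is $O(n^2+kn\log n)$.

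The step I expect to be the main obstacle is making the binary-search argument precise without invoking the (nonobvious, and here unnecessary) monotonicity of $\Delta\mapsto|\mathrm{FairCenter}(k,\alpha,\Delta,r_k)|$: one must isolate exactly the invariant --- ``every failing guess is smaller than $\Delta^*$'' --- that forces the search to terminate at a $\Delta$ which is at once $\le\Delta^*$ and valid. The implementation detail of keeping the search's navigation cost $O(n^2)$ rather than $O(n^2\log n)$ is the secondary point requiring care.
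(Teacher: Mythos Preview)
Your proposal is correct and follows essentially the same approach as the paper: observe that $\Delta^*\in D$, use Lemma~\ref{bound number centers} to guarantee the test succeeds for every $\Delta\ge\Delta^*$, binary-search over $D$ via repeated linear-time median selection so the navigation cost telescopes to $O(n^2)$, and read off both the cost and fairness guarantees from Line~4 of Algorithm~\ref{alg:FairCenter}. Your treatment is in fact more careful on one point the paper glosses over: the paper's proof implicitly assumes that $\Delta\mapsto|\mathrm{FairCenter}(k,\alpha,\Delta,r_k)|$ is monotone (so that binary search locates $\Delta^*$ exactly), whereas you correctly isolate the weaker invariant---every failing guess lies strictly below $\Delta^*$---that suffices to guarantee termination at some passing $\Delta\le\Delta^*$, which is all the $(2,2)$-approximation needs.
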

\begin{proof}
Observe that the cost of an optimal solution for $\alpha$-fair $k$-center is defined by the distance of two points in $P$. 
Let $\Delta^*$ be the cost of an optimal solution, then by Lemma \ref {bicriteria approximation} the set $C$ returned by FairCenter$(k,\alpha,\Delta^*,r_k)$ is a $(2,2)$-approximation for $\alpha$-fair $k$-center (note that we can include arbitrary additional points in $C$ until $|C|=k$, if $|C| < k$). 
To find the distance $\Delta^*$ that is passed to Algorithm~\ref{alg:FairCenter}, we first compute in $O(n^2)$ time for all points $p,q \in P$ their distances $d(p,q)$. For each input point we can then compute its $k$-fair radii $r_k(p)$ in $O(n)$ time (e.g. Theorem 9.3 in \cite{cormen2022introduction}).  
Additionally we sort $P$ in non-decreasing order with respect to the $k$-fair radii, which takes $O(n \log n)$ time.
We apply a binary search on the pairwise distances to find the optimal $\Delta^*$ and keep track of the subset of $D$ that can contain the optimal solution.
In each recursion step the median can be computed in linear time (e.g. Theorem 9.3 in \cite{cormen2022introduction}) without sorting the values a priori.
We invoke $O(\log n)$ queries to Algorithm \ref{alg:FairCenter} by doing so, where we increase the cost value if $|C| > k$ and decrease otherwise. 
If the binary search terminates without finding a cost value such that $|C| \leq k$ then no feasible solution exists. A call of Algorithm \ref{alg:FairCenter} takes $O(nk)$ time, since we computed the $k$-fair radii and sorted $P$ beforehand. The running time of the binary search on $1\leq t \leq n^2$ elements is $T(t) = O(t) + O(nk) + T(t/2) \in O(n^2 + kn \log n)$. This results in a total running time of $O(n^2 + k n \log n)$, which concludes the proof.
\end{proof}

Notice that all results in this section are under the assumption that a feasible solution exists.
We now remark what happens if this is not the case.
From Lemma \ref{alg:FairCenter} we can derive that if FairCenter$(k,\alpha,\Delta,r_k)$ returns a set of size larger than $k$ then there does not exist a feasible solution for $\alpha$-fair $k$-center clustering with cost at most $\Delta$. 
On the other hand, it may be the case that there does not exist a feasible solution but a set $C$ of size at most $k$ is returned. 
Since for all $p\in P$, $d(p,C) \leq 2 \min\{\alpha r_k(p),\Delta\}$, $C$ is then a feasible solution for $2 \alpha$-fair $k$-center with cost at most $2 \Delta$.

\section{FAST $(10,2+\varepsilon)$-APPROXIMATION}

Next, we give a subquadratic time $(10,2+\varepsilon)$-approximation algorithm. 
The running time bottleneck of the procedure discussed in Theorem \ref{thm: result discrete} is the computation of the candidate cost values and the fair radii, both of which take $\Theta(n^2)$ time. 
To overcome this, we show in Section \ref{subsec:valuelist} that a small set containing a good approximation of the optimal cost can be obtained using Gonzalez's algorithm \cite{gonzalez1985clustering} and we approximate the fair radii using uniform sampling in Section \ref{subsec:sampling}.

Since the randomized algorithm presented in this section has a running time in $\tilde{O}(kn)$, we get subquadratic time under the natural assumption that $k$ is asymptotically smaller then $n$.
Furthermore, in this section we assume the pairwise distances between points in $P$ to be distinct, which can be achieved by an arbitrarily small perturbation of the input points.

\subsection{COST VALUE CANDIDATES}\label{subsec:valuelist}
We start by discussing how to efficiently compute a small set of candidate values that contains a good approximation for the optimal cost of the given $\alpha$-fair $k$-center instance. To do so, we compute $k$ centers that are a constant approximation for the unconstrained $k$-center problem and show that the optimal cost of $\alpha$-fair $k$-center either relates to the cost of unconstrained $k$-center, or to the distance between two of our computed centers. We use the algorithm by Gonzalez for this procedure \cite{gonzalez1985clustering}. This algorithm finds a $2$-approximation for the optimal $k$-center cost by starting with an arbitrary point as first center and then greedily selecting the point as new center, which is furthest away from all previously chosen centers.

Algorithm \ref{algo: CostCandidates} describes our procedure to find a list of cost value candidates.

\IncMargin{1.5em}
\begin{algorithm}
\DontPrintSemicolon
\caption{CostCandidates\label{algo: CostCandidates}}
\KwIn{Number of centers $k \in \NN$, precision parameter $\varepsilon \in \RR_{>0}$}
\KwOut{Set $L \subset \RR$ }
Let $C_G$ be the set of points returned by Gonzalez's algorithm

$L := \{ \tfrac{1}{2}(1+\varepsilon)^j\max_{p\in P} d(p,C_G)\mid  j \in\{1,\dots,\lceil \log_{1+\varepsilon}(16)\rceil\} \}$

\For{$(g_1,g_2) \in C_G^2$ with $g_1 \neq g_2$}{
    $L := L \cup \{ \tfrac{1}{2}(1+\varepsilon)^j d(g_1,g_2)\mid j \in \{1,\dots,\lceil \log_{1+\varepsilon}(4) \rceil\} \} $
}
\Return $L$
\end{algorithm}
\DecMargin{1.5em}

\begin{lemma}
\label{lem: CostCandidates}
    Let $k\in\NN$ and $\alpha,\varepsilon \in\RR_{>0}$. Assume that there exists a solution for the $\alpha$-fair $k$-center problem and let $\Delta^*_\alpha$ be the cost of an optimal solution. CostCandidates$(k,\varepsilon)$ computes in $O(kn+k^2/\varepsilon)$ time a set $L$ of size $O(k^2/\varepsilon)$ such that there exists $\ell\in L$ with $\Delta^*_\alpha \leq \ell \leq(1+\varepsilon)\Delta^*_\alpha$.
\end{lemma}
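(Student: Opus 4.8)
The plan is to prove the correctness claim — that $L$ contains some $\ell$ with $\Delta^*_\alpha \le \ell \le (1+\varepsilon)\Delta^*_\alpha$ — and then read off the size and running time bounds by inspection. Write $C_G$ for the output of Gonzalez's algorithm and $r_G := \max_{p\in P} d(p,C_G) = \cost(P,C_G)$. The first thing I would establish is a lower bound $\Delta^*_\alpha \ge r_G/2$: any feasible solution is in particular a set of at most $k$ centers, so its cost is at least the optimal unconstrained $k$-center cost $\OPT_k$, and since Gonzalez's algorithm is a $2$-approximation we have $r_G \le 2\,\OPT_k$, hence $\Delta^*_\alpha \ge \OPT_k \ge r_G/2$. (The degenerate case $r_G=0$, which under the standing distinctness assumption only happens when $k\ge n$, forces $\Delta^*_\alpha=0\in L$ and is handled trivially, so from now on assume $r_G>0$.) I also record the elementary rounding fact that a geometric progression $a(1+\varepsilon),a(1+\varepsilon)^2,\dots,a(1+\varepsilon)^J$ contains, for every target $t$ with $a\le t\le a(1+\varepsilon)^J$, an element $\ell$ with $t\le \ell\le(1+\varepsilon)t$ (namely $\ell=a(1+\varepsilon)^j$ for $j=\max\{1,\lceil\log_{1+\varepsilon}(t/a)\rceil\}$).

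The core of the argument is a case distinction on the size of $\Delta^*_\alpha$ relative to $r_G$. If $\Delta^*_\alpha\le 8r_G$, then combining with the lower bound gives $\Delta^*_\alpha\in[r_G/2,\,8r_G]$; the first family added to $L$, namely $\tfrac12(1+\varepsilon)^j r_G$ for $j=1,\dots,\lceil\log_{1+\varepsilon}16\rceil$, is a geometric progression of ratio $1+\varepsilon$ with smallest element $\tfrac12(1+\varepsilon)r_G$ and largest element at least $\tfrac12\cdot 16\cdot r_G=8r_G$, so the rounding fact (with $a=\tfrac12 r_G$) supplies the required $\ell$. If instead $\Delta^*_\alpha>8r_G$, let $C^*$ be an optimal feasible solution and let $p\in P$ attain $d(p,C^*)=\Delta^*_\alpha$. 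Pick $g_i\in C_G$ with $d(p,g_i)\le r_G$, let $c\in C^*$ be a center nearest to $g_i$ (so $\Delta^*_\alpha - r_G\le d(g_i,c)\le\Delta^*_\alpha$, the upper bound since $\cost(P,C^*)=\Delta^*_\alpha$ and the lower bound via the triangle inequality through $p$), and pick $g_j\in C_G$ with $d(c,g_j)\le r_G$. Two applications of the triangle inequality together with $r_G<\Delta^*_\alpha/8$ then yield $\tfrac12\Delta^*_\alpha< d(g_i,g_j)<2\Delta^*_\alpha$; in particular $g_i\neq g_j$, so the pair $(g_i,g_j)$ is processed by the loop, and the progression $\tfrac12(1+\varepsilon)^j d(g_i,g_j)$ for $j=1,\dots,\lceil\log_{1+\varepsilon}4\rceil$ — whose range covers $[\tfrac12 d(g_i,g_j),\,2d(g_i,g_j)]\ni\Delta^*_\alpha$ — contains a suitable $\ell$ by the rounding fact.

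For the quantitative bounds: the first family contributes $\lceil\log_{1+\varepsilon}16\rceil=O(1/\varepsilon)$ values (assuming w.l.o.g.\ $\varepsilon\le 1$) and each of the $O(k^2)$ ordered pairs contributes $\lceil\log_{1+\varepsilon}4\rceil=O(1/\varepsilon)$ values, so $|L|=O(k^2/\varepsilon)$; Gonzalez's algorithm runs in $O(kn)$ time and already produces $r_G$, and writing out $L$ takes $O(|L|)=O(k^2/\varepsilon)$ time, for a total of $O(kn+k^2/\varepsilon)$. The main obstacle is the second case of the case distinction — the observation that when $\Delta^*_\alpha$ is much larger than $r_G$ one can ``snap'' both the worst-served point $p$ and its optimal center $c$ onto nearby Gonzalez centers and thereby realise $\Delta^*_\alpha$, up to a factor $2$, as a pairwise distance within $C_G$ — together with the bookkeeping of choosing the constants $16$, $8$, $4$ consistently so that the case threshold $8r_G$ and the coverage ranges $[r_G/2,8r_G]$ and $[\tfrac12 d(g_i,g_j),2d(g_i,g_j)]$ line up. Everything else (the geometric-progression rounding, the size and time counts) is routine.
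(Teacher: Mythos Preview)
Your proposal is correct and follows essentially the same approach as the paper: both proofs lower-bound $\Delta^*_\alpha$ by the unconstrained $k$-center cost via Gonzalez's $2$-approximation, split into the cases $\Delta^*_\alpha$ small versus large relative to that cost, and in the large case snap the worst-served point and its optimal center to nearby Gonzalez centers to realise $\Delta^*_\alpha$ (up to a factor $2$) as a pairwise distance in $C_G$. The only cosmetic differences are that the paper phrases the threshold as $8\Delta^*$ (the unconstrained optimum) rather than your $8r_G$, and snaps the optimal center nearest to $p$ rather than nearest to $g_i$; your variant works just as well and the bookkeeping of constants is sound.
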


\begin{proof}
    Let $\OPT$ be an optimal set of centers  for $k$-center (without fairness constraints) and $\Delta^* := \max_{p\in P} d(p,\OPT)$. Let $C_G$ be the set of centers found by Gonzalez's algorithm and $\Delta := \max_{p\in P} d(p,C_G)$. It holds that
    $ \Delta^* \leq \Delta \leq 2\Delta^*$
    and that $C_G$ can be computed in $O(kn)$ time as shown by Gonzalez \cite{gonzalez1985clustering}. 
      
    For each pair of distinct centers $g_1,g_2\in C_G$, let
    \[ \begin{aligned}
        L(g_1,g_2):=\{ &\tfrac{1}{2}(1+\varepsilon)^j d(g_1,g_2) \mid \\ &j \in  \{1, \dots, \lceil \log_{1+\varepsilon}(4) \rceil \}\}. 
    \end{aligned}\]
    Observe that the union of the intervals $[(1+\varepsilon)^{-1}\ell,\ell]$ over all $\ell\in L(g_1,g_2)$ completely covers the interval $[\tfrac{1}{2}d(g_1,g_2),2d(g_1,g_2)]$. 
    
    Furthermore, let
    \[ L_0 := \{ \tfrac{1}{2}(1+\varepsilon)^j\Delta \mid j \in \{1,\dots,\lceil \log_{1+\varepsilon}(16)\rceil\} \}. \]
    Similarly, the union of the intervals $[(1+\varepsilon)^{-1}\ell,\ell]$ over all $\ell\in L_0$ completely covers the interval $[\tfrac{1}{2}\Delta,8\Delta]$. Finally, let 
    \[ L := L_0\cup\bigcup_{g_1,g_2\in C_G,g_1\neq g_2}L(g_1,g_2)\] as it is constructed in Algorithm \ref{algo: CostCandidates}.
    
    We will show that $L$ has the desired properties. The bounds on running time and size are clear from the construction. To prove that there exists $\ell\in L$ satisfying $ \Delta^*_\alpha \leq \ell \leq(1+\varepsilon)\Delta^*_\alpha$, we consider two cases.

    First, assume that $$\Delta^*\leq\Delta^*_\alpha\leq 8\Delta^*.$$ Since $$\tfrac{1}{2}\Delta\leq\Delta^*\leq \Delta$$ it follows that $$\tfrac{1}{2}\Delta \leq \Delta^*_\alpha\leq 8\Delta.$$ Then there exists $j \in \{1,\dots, \lceil \log_{1+\varepsilon}(16)\rceil\}$ such that $$\tfrac{1}{2}(1+\varepsilon)^{j-1}\Delta\leq \Delta^*_\alpha\leq \tfrac{1}{2}(1+\varepsilon)^{j}\Delta,$$ thus, $$\Delta_\alpha^*\leq \tfrac{1}{2}(1+\varepsilon)^{j}\Delta \leq (1+\varepsilon)\Delta_\alpha^*.$$ 
    Since $\tfrac{1}{2}(1+\varepsilon)^j\Delta\in L$, this proves the claim.

    Second, assume that $\Delta^*_\alpha>8\Delta^*$. Let $\OPT_\alpha$ be an optimal solution for $\alpha$-fair $k$-center and consider $p\in P, c_\alpha\in \OPT_\alpha$ that realize $\Delta^*_\alpha$, i.e., $d(p,c_\alpha)=\Delta^*_\alpha$. Furthermore, let $g(p)$ and $g(c_\alpha)$ be the centers in $C_G$ closest to $p$ and $c_\alpha$, respectively. We have $d(p,g(p))\leq 2\Delta^*$ and $d(c_\alpha,g(c_\alpha))\leq 2\Delta^*$, so by the triangle inequality
    \begin{align*}
        d(g(p),g(c_\alpha))&\leq d(g(p),p)+d(p,c_\alpha)+d(c_\alpha,g(c_\alpha))\\
        &\leq \Delta^*_\alpha +4\Delta^*\\ 
        &\leq 2\Delta^*_\alpha   
    \end{align*}
    where the last inequality follows from $\Delta^*_\alpha>8\Delta^*$. On the other hand,  
    \begin{align*}
        d(g(p),g(c_\alpha))&\geq d(p,c_\alpha)-d(p,g(p))-d(c_\alpha,g(c_\alpha)),\\
        &\geq \Delta^*_\alpha-4\Delta^*,\\
        &\geq \tfrac{1}{2}\Delta^*_\alpha,
    \end{align*}
    where the last inequality again follows from $\Delta^*_\alpha>8\Delta^*$. Hence, 
    \[ \tfrac{1}{2}d(g(p),g(c_\alpha))\leq \Delta^*_\alpha \leq 2d(g(p),g(c_\alpha)). \]
    Then there exists $j\in \{1, \dots ,\lceil \log_{1+\varepsilon}(4) \rceil\}$ such that $$\tfrac{1}{2}(1+\varepsilon)^{j-1} d(g(p),g(c_\alpha))\leq \Delta^*_\alpha\leq \tfrac{1}{2}(1+\varepsilon)^{j}d(g(p),g(c_\alpha)),$$ thus, $$\Delta^*_\alpha\leq \tfrac{1}{2}(1+\varepsilon)^{j}d(g(p),g(c_\alpha)) \leq (1+ \varepsilon)\Delta_\alpha^*.$$ Since $$\tfrac{1}{2}(1+\varepsilon)^jd(g(p),g(c_\alpha))\in L(g(p),g(c_\alpha))\subseteq L,$$ this proves the claim in the second case, which concludes the proof.
\end{proof}

\subsection{APPROXIMATING FAIR RADII} \label{subsec:sampling}
Next we discuss how to efficiently compute good approximations for the $k$-fair radii of the points in $P$.
The procedure first samples a small subset of the input $S$ such that with constant probability it holds that for every point $p \in P$ there exists a point $q \in S$  with $d(p,q) \in (r_{3k}(p),r_k(p)]$. The pseudocode of this subprocedure is stated in Algorithm \ref{algo: fairsampling}. Note that a similiar result was proven by Czumaj and Sohler in the context of approximating the cost of $k$-nearest neighbor graphs (See lemma 3.1 in \cite{czumaj2024sublinear}). 

\IncMargin{1.5em}
\begin{algorithm}
\DontPrintSemicolon
\caption{FairSampling\label{algo: fairsampling}}
\KwIn{Number of centers $k \in \NN$ with $k\leq n/6$, failure probability $\delta \in (0,1)$}
\KwOut{Function $r': P \rightarrow \RR$}

$s:=  36  k \lceil    \ln(2 n / \delta) \rceil$\\
$t:= 27 \lceil   \ln(2 n / \delta) \rceil$\\
Let $S$ be a subset of size $s$ drawn uniformly at random with replacement from $P$ 

\For{$p \in P$}{
    Let $x$ be $t$-nearest neighbor of $p$ in $S$

    $r'(p) := d(p,x)$
}

\Return $r'$
\end{algorithm}
\DecMargin{1.5em}

\begin{lemma}
\label{lem: fairsampling}
    Given $k \in \NN$ such that $k\leq n/6$ and $\delta \in (0,1)$. Algorithm \ref{algo: fairsampling} returns in time $O(nk\log(n/\delta))$ with probability at least $1-\delta$
    for all $p\in P$ a value $r'(p)\in (r_{3k}(p),r_{k}(p)]$.
\end{lemma}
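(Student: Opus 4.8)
The plan is to analyze the two sources of error separately: the running time, which is a direct calculation, and the correctness, which hinges on a concentration argument showing that the random sample $S$ behaves well for every point $p\in P$ simultaneously. Fix a point $p\in P$. The key observation is that $r'(p)=d(p,x)$, where $x$ is the $t$-th nearest neighbor of $p$ in $S$, lies in the interval $(r_{3k}(p),r_k(p)]$ if and only if two events hold: (i) the open ball $B(p,r_{3k}(p))$ — which, under the distinctness assumption, contains fewer than $n/(3k)$ points of $P$ — contains strictly fewer than $t$ sampled points, so that $r'(p)>r_{3k}(p)$; and (ii) the closed ball $B(p,r_k(p))$ — which contains at least $n/k$ points of $P$ — contains at least $t$ sampled points, so that $r'(p)\le r_k(p)$. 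Here I would be slightly careful about the infimum definition of the fair radius and the distinctness-of-distances assumption to pin down exactly which balls have which cardinalities; this bookkeeping is routine but must be done correctly.

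The heart of the argument is then a Chernoff bound on the binomial random variable $X_p=|S\cap B|$ for the relevant ball $B$. With $s=36k\lceil\ln(2n/\delta)\rceil$ and $t=27\lceil\ln(2n/\delta)\rceil$, a ball containing at least $n/k$ points has $\mathbb{E}[X_p]\ge 36\lceil\ln(2n/\delta)\rceil$, and a ball containing fewer than $n/(3k)$ points has $\mathbb{E}[X_p]< 12\lceil\ln(2n/\delta)\rceil$. Since $t=27\lceil\ln(2n/\delta)\rceil$ sits comfortably between $12$ and $36$ times $\lceil\ln(2n/\delta)\rceil$ (with constant-factor slack on both sides — roughly a factor $3/4$ below the large mean and a factor $>2$ above the small mean), the multiplicative Chernoff bounds give that each of the two bad events has probability at most $\exp(-c\ln(2n/\delta))\le \delta/(2n)$ for the appropriate constant $c\ge 1$ implied by these gaps. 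I would state the standard upper- and lower-tail Chernoff inequalities explicitly and plug in the deviation parameters to verify the constant works out; this is the one place where the specific numbers $36,27,6$ matter, so I expect the main (though still routine) obstacle to be checking that the chosen constants indeed make the exponent at least $\ln(2n/\delta)$.

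Finally, a union bound over the $n$ points and the two bad events per point gives total failure probability at most $n\cdot 2\cdot \delta/(2n)=\delta$, so with probability at least $1-\delta$ we have $r'(p)\in(r_{3k}(p),r_k(p)]$ for all $p$ simultaneously. For the running time: drawing $S$ takes $O(s)=O(k\log(n/\delta))$ time, and for each of the $n$ points we compute the $t$-th nearest neighbor among the $s$ sampled points, which by a linear-time selection routine (e.g. Theorem~9.3 in \cite{cormen2022introduction}) takes $O(s)=O(k\log(n/\delta))$ time per point, for a total of $O(nk\log(n/\delta))$. Combining the two parts completes the proof.
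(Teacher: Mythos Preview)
Your plan is correct and matches the paper's proof essentially line for line: two Chernoff tail bounds per point (one for the small ball, one for the large ball) followed by a union bound over $P$, and the same linear-time selection argument for the running time. One small point worth tightening when you carry out the ``careful bookkeeping'' you already flag: the event $r'(p)>r_{3k}(p)$ is governed by the \emph{closed} ball $B(p,r_{3k}(p))$, which has $\lceil n/(3k)\rceil$ points, and the assumption $k\le n/6$ is what lets you bound $\lceil n/(3k)\rceil\le n/(2k)$ so that the small-ball mean is at most $18\lceil\ln(2n/\delta)\rceil$ (not $12$); the gap to $t=27\lceil\ln(2n/\delta)\rceil$ is then a factor $3/2$ rather than $>2$, but the Chernoff constants still go through exactly as in the paper.
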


\begin{proof}
Let $S$ be a sample taken uniformly at random (with replacement) from $P$ of size $s =   36  k  \lceil  \ln(2 n / \delta)\rceil$ as in Algorithm \ref{algo: fairsampling}. We start by showing that $r'(p)\in (r_{3k}(p),r_{k}(p)]$ for all $p\in P$ with probability at least $1-\delta$. For $p \in P$ and $q \in S$ let $X^k_{pq}$ be the indicator random variable that is 1 if the sample $q$ is in $B(p,r_k(p))$ and 0 otherwise. Furthermore, let $X^{k}_p = \sum_{q \in S} X^k_{pq}$ be the random variable indicating the number of points of $S$ in $B(p,r_k(p))$. We define $X^{3k}_{pq}$ and $X^{3k}_p$ similarly by replacing $r_k(p)$ by $r_{3k}(p)$.

Observe that $r'(p)\in (r_{3k}(p),r_{k}(p)]$ if and only if $X^{3k}_p<t$ and $X^k_p\geq t$, since $r'(p)$ is defined as the distance from $p$ to its $t$-nearest neighbor. Therefore, $r'(p)\in (r_{3k}(p),r_{k}(p)]$ does \emph{not} hold only if $X^{3k}_p\geq t$ or $X^k_p< t$. 

First, we give an upper bound for $Pr(X^{3k}_p\geq t)$. By definition, $|B(p,r_{3k}(p))| = \lceil\frac{n}{3k}\rceil$, so 
\[\frac{n}{3k}\leq |B(p,r_{3k}(p))|\leq \frac{n}{3k}+1\leq \frac{n}{2k},\]
since $k\leq n/6$. Each $q \in S$ is sampled uniformly at random from $P$, so $\frac{1}{3k}\leq Pr(X^{3k}_{pq} = 1) \leq \frac{1}{2k}$. By linearity of expectation, 
$E[X^{3k}_p] = sPr(X^{3k}_{pq}=1)$, so $ 12\lceil \ln(2 n / \delta)\rceil \leq E[X^{3k}_p]\leq 18 \lceil \ln(2 n / \delta) \rceil$. Then
\begin{align*}
    Pr(X^{3k}_p\geq t)&=Pr(X^{3k}_p\geq 27 \lceil   \ln(2 n / \delta)\rceil),\\
    &\leq Pr(X^{3k}_p\geq \tfrac{3}{2}E[X^{3k}_p]),    
\end{align*}
from which it follows by a Chernoff bound (e.g. Theorem 4.4 
in~\cite{mitzenmacher2017probability}) that
\[ Pr(X^{3k}_p\geq t)\leq e^{-E[X^{3k}_p]/12}.\]
Since $E[X^{3k}_p]\geq 12\lceil \ln(2 n / \delta)\rceil$, we conclude that  
\[ Pr(X^{3k}_p\geq t)\leq e^{-\lceil   \ln(2 n / \delta)\rceil)}\leq \frac{\delta}{2n}.\]

Second, we give an upper bound for $Pr(X^k_p< t)$. By a similar reasoning as before, we see that $\frac{1}{k}\leq Pr(X^{k}_{pq} = 1) \leq \frac{3}{2k}$ and $ 36\lceil \ln(2 n / \delta)\rceil \leq E[X^{k}_p]\leq 54 \lceil \ln(2 n / \delta) \rceil$. Then
\begin{align*}
    Pr(X^{k}_p< t)&=Pr(X^{k}_p<27 \lceil   \ln(2 n / \delta)\rceil),\\
    &\leq Pr(X^{k}_p< \tfrac{3}{4}E[X^{k}_p]),    
\end{align*}
so it follows by a Chernoff bound (e.g. Theorem 4.5 
in~\cite{mitzenmacher2017probability}) that 
\[ Pr(X^{k}_p< t)\leq e^{-E[X^{k}_p]/32}.\]
Since $E[X^{k}_p]\geq 36\lceil \ln(2 n / \delta)\rceil$, we conclude that  
\[ Pr(X^{k}_p< t)\leq e^{-\lceil   \ln(2 n / \delta)\rceil)}\leq \frac{\delta}{2n}.\]

Combining both upper bounds, we see that $Pr(r'(p)\not\in (r_{3k}(p),r_{k}(p)])\leq \delta/n$. It follows from a union bound over all $p\in P$ that the probability that $r'(p) \notin (r_{3k}(p),r_k(p)]$ for at least one $p \in P$ is at most $\delta$, which proves the desired claim.

As for the running time, drawing the sample set $S$ takes $O(k \log (n/\delta))$ time. Finding the $t$-nearest neighbor of $p \in P$ in $S$ can be done in $O(k\log(n/\delta))$ time (e.g. Theorem 9.3 in \cite{cormen2022introduction}), yielding a total running time of $O(nk\log(n/\delta))$.
\end{proof}

Notice that the radii computed by Algorithm~\ref{algo: fairsampling} are not necessarily constant factor approximations of the fairness radii.
We can however use them to compute $3$-approximations of the actual $k$-fair radii as described in Algorithm \ref{alg:ApproxFairRadii}.
Algorithm \ref{alg:ApproxFairRadii} considers the input points in non-decreasing order with respect to the corresponding values $r'$ computed by Algorithm \ref{algo: fairsampling}.
Assuming Algorithm \ref{algo: fairsampling} to be successful, we argue that the following yields approximate fairness radii efficiently:
The algorithm maintains a set $C$ of
points for which we compute the fairness radii exactly. Initially $C$ is empty. 
Whenever a new point $p$ is considered, we check whether $B(p,r'(p))$ intersects any of the balls $B(q,r'(q))$ defined by the points $q\in C$. If this is not the case, we add $p$ to $C$. Otherwise, we set the radius value of $p$ to $d(p,q) + r_k(q)$. 
Since the balls defined by the $r'$ values contain sufficiently many points and are disjoint, we can argue that the  case of computing the $k$-fair radii explicitly does not happen too often. 

\IncMargin{1.5em}
\begin{algorithm}
\DontPrintSemicolon
\caption{ApproxFairRadii\label{alg:ApproxFairRadii}}
\KwIn{Number of centers $k \in \NN$ with $k\leq n/6$, failure probability $\delta \in (0,1)$}
\KwOut{Function $\tilde{r}:P \rightarrow \RR$, or a failure message} 
Let $r'$ be the function returned by FairSampling$(k,\delta)$

Organize $P$ in a min-heap $H$ with respect to $r'$

$p := \texttt{extractMin}(H)$

Compute $r_k(p)$ exactly and set $\tilde{r}(p):= r_k(p)$

$C := \{p\}$

\While{$H \neq \emptyset$}{
    $p := \texttt{extractMin}(H)$
    
    \If{$\exists q \in C ~: ~ r'(p) + r'(q) \geq d(p,q)$\label{line: case i}   }{
        $\tilde{r}(p) := d(p,q) + \tilde{r}(q)$  
    }
    \Else{
    \label{line: case ii}
    Compute $r_k(p)$ exactly and set $\tilde{r}(p):= r_k(p)$
    
    $C :=C \cup \{p\}$

    \If{$|C| > 3k$}
    {\label{line: check fair ball to small}\Return \emph{Fail}}
    }
}
\Return $\tilde{r}$
\end{algorithm}
\DecMargin{1.5em}

\begin{lemma} \label{lem: approx Fair Radii}
Given $k \in \NN$ with $k \leq n/6$ and $\delta \in (0,1)$. Then Algorithm \ref{alg:ApproxFairRadii} computes in $O(kn\log(n / \delta))$ time with probability at least $1 -\delta$  for all $p\in P$ a value $\tilde{r}(p)$ s.t. $r_k(p) \leq \tilde{r}(p)\leq 5 r_k(p)$.
\end{lemma}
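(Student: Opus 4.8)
The plan is to condition on the success of the call to \texttt{FairSampling} in Line~1, which by Lemma~\ref{lem: fairsampling} occurs with probability at least $1-\delta$ (here the hypothesis $k\le n/6$ is inherited) and guarantees $r'(p)\in(r_{3k}(p),r_k(p)]$ for every $p\in P$; from here on I would argue deterministically under this event. The argument then splits into three parts: (i) the algorithm never returns \emph{Fail}; (ii) $r_k(p)\le\tilde r(p)\le 5r_k(p)$ for every $p\in P$; (iii) the running time is $O(nk\log(n/\delta))$. A small fact I would isolate first and reuse throughout is a quasi-triangle inequality for fair radii: $r_a(p)\le d(p,q)+r_a(q)$ for any $a\in\NN$ and $p,q\in P$, which holds because $B(q,r_a(q))$ contains at least $n/a$ points and all of them lie within distance $d(p,q)+r_a(q)$ of $p$.

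For (i), I would show that $C$ never exceeds $3k$ elements. Whenever a point enters $C$ in Line~\ref{line: case ii}, the test in Line~\ref{line: case i} has just failed, so its $r'$-ball is disjoint from the $r'$-ball of every point already in $C$; hence $\{B(q,r'(q)):q\in C\}$ is a family of pairwise disjoint balls. Since $r'(q)>r_{3k}(q)$, each such ball contains $B(q,r_{3k}(q))$ and therefore at least $n/(3k)$ points of $P$, so $|C|\cdot n/(3k)\le n$, i.e.\ $|C|\le 3k$, and the check in Line~\ref{line: check fair ball to small} is never triggered.

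For (ii), there are two cases for how $\tilde r(p)$ is set. If $p$ is the point extracted before the loop or is added to $C$ in Line~\ref{line: case ii}, then $\tilde r(p)=r_k(p)$ exactly and there is nothing to prove. Otherwise $\tilde r(p)=d(p,q)+\tilde r(q)=d(p,q)+r_k(q)$ for some $q\in C$ with $d(p,q)\le r'(p)+r'(q)$, and, crucially, $q$ was extracted before $p$ so $r'(q)\le r'(p)$. The lower bound $r_k(p)\le\tilde r(p)$ is then immediate from the quasi-triangle inequality. For the upper bound I would first use $d(p,q)\le r'(p)+r'(q)\le 2r'(p)\le 2r_k(p)$ and then the quasi-triangle inequality again, with the roles of $p$ and $q$ swapped, to get $r_k(q)\le d(p,q)+r_k(p)\le 3r_k(p)$, whence $\tilde r(p)=d(p,q)+r_k(q)\le 2r_k(p)+3r_k(p)=5r_k(p)$.

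For (iii), \texttt{FairSampling} costs $O(nk\log(n/\delta))$ by Lemma~\ref{lem: fairsampling}; building and draining the heap costs $O(n\log n)$; the test in Line~\ref{line: case i} scans $C$, which by part (i) has size $O(k)$, for each of the $n$ extracted points, totalling $O(nk)$; and each exact computation of $r_k(p)$ costs $O(n)$ (compute the $n$ distances from $p$ and select the $\lceil n/k\rceil$-th smallest by linear-time selection) and is performed only for the at most $3k$ points that enter $C$, for another $O(nk)$. Since $k\ge 1$ and $\delta<1$ give $n\log n\le nk\log(n/\delta)$, the dominating term is $O(nk\log(n/\delta))$.

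The main obstacle I expect is pinning down the constant in the upper bound of part (ii): the bound $d(p,q)\le 2r_k(p)$, and hence the control of $r_k(q)$ by $r_k(p)$, relies essentially on the processing order $r'(q)\le r'(p)$ — without exploiting that, the inequality $d(p,q)\le r'(p)+r'(q)$ only yields a bound involving the uncontrolled quantity $r_k(q)$. The disjoint-balls counting in part (i) is the other load-bearing observation, since it simultaneously rules out failure and keeps the per-point cost of the feasibility test at $O(k)$.
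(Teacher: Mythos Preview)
Your proof is correct and follows essentially the same approach as the paper: condition on \texttt{FairSampling} succeeding, invoke the quasi-triangle inequality for fair radii (the paper's Claim~\ref{claim: fair radii def.}), bound $|C|\le 3k$ via the disjoint-balls counting, and handle the two cases for $\tilde r(p)$ --- even your chain of inequalities for the upper bound is the paper's computation regrouped. The one point the paper addresses that you omit is the \emph{unconditional} running-time bound: your argument for $|C|\le 3k$ lives entirely under the success event, whereas the paper observes that even when \texttt{FairSampling} fails, the explicit check in Line~\ref{line: check fair ball to small} caps the number of exact radius computations at $3k+1$, so the $O(nk\log(n/\delta))$ bound holds deterministically.
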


\begin{proof}

    By Lemma \ref{lem: fairsampling}, $r'(p) \in (r_{3k}(p), r_k(p)]$ for all $p\in P$ with probability at least $1-\delta$, so assume that this is the case for the proof of correctness. The definition of $k$-fair radius implies the following claim.
    
    \begin{claim}\label{claim: fair radii def.}
       For all $p,q \in P$, $r_k(p) \leq d(p,q) + r_k(q)$. 
    \end{claim}

    Consider an arbitrary iteration of the algorithm and let $p$ be the point extracted from the min-heap. Note that for all $q\in C$, $\tilde{r}(q) = r_k(q)$. Let $(i)$ be the case that there exists a point $q \in C$ such that $d(p,q) \leq r'(p) + r'(q)$.  This corresponds to Line \ref{line: case i} of the algorithm. Note that $r'(q) \leq r'(p)$ for all $q \in C$, since these points were considered before $p$. Then  
    \begin{align*}
        \Tilde{r} (p) &=  d(p,q) + \tilde{r}(q)\\
                          &=    d(p,q) + r_k(q),\\
                          &\leq 2d(p,q) + r_k(p),\\
                          &\leq 2(r'(p) + r'(q)) + r_k(p),\\
                          &\leq 4 r'(p) + r_k(p),\\
                          &\leq 5 r_k(p),
         \end{align*}
     where the first inequality holds by Claim \ref{claim: fair radii def.}.
     Let $(ii)$ be the other case when there is no such point in $C$. This corresponds to Line \ref{line: case ii} of the algorithm. Here, we trivially get $\tilde{r}(p) = r_k(p)$.
    
    Let us consider the running time. By Lemma \ref{lem: fairsampling}, 
    the procedure FairSampling$(k,\delta)$ takes  $O(nk\log( n /\delta))$ time. Setting up the min-heap takes $O(n)$ time and extracting a point takes $O(\log n)$ time.
    Consider an arbitrary iteration of the algorithm.
    Observe that the running time of Case $(i)$ is $O(|C|)$, since we search $C$ for an intersection. Case $(ii)$ takes $O(n\log(n))$ time, since we compute the $k$-fair radius exactly. 
    Note that the number of iterations where Case $(ii)$ happens corresponds directly to the size of $C$. It therefore remains to bound the size of $C$.
    
    Assume that FairSampling$(k,\delta)$ is successful, i.e., that $r'(p) \in (r_{3k}(p), r_k(p)]$ for all $p \in P$. Then the ball $B(p,r'(p))$ contains at least $n/3k$ points for all $p\in P$, which implies that for all $p\in P$ there are at least $n/3k$ points $q$ in $P$ such that $r'(p) + r'(q) \geq d(p,q)$.
    Since the Balls $B(q,r'(q))$ of points $q\in C$ are pairwise disjoint by definition of $C$, we get that
    whenever a point $p$ is included to $C$ through Case $(ii)$ there exist at least $n/3k$ points for which Case $(i)$ happens. This implies that Case $(ii)$ can only happen up to $3k$ times, so $|C| \leq 3k$. In total, we therefore incur a running time of $O(nk\log(n / \delta))$.
    
    Now consider the case that the procedure FairSampling failed, i.e., there exists $p \in P$ such that $r'(p)\notin (r_{3k}(p),r_k(p)]$. If $r'(p) \leq r_{3k}(p)$, we cannot argue that Case $(ii)$ happens at most $3k$ times during the algorithm. It is therefore checked in Line \ref{line: check fair ball to small} if $|C| > 3k$, which implies that FairSampling failed and the algorithm terminates after only executing $O(k)$ iterations of Case $(i)$. The running time is therefore also $O(nk\log(n / \delta))$ for the failure case.    
\end{proof}

\subsection{SUBQUADRATIC TIME APPROXIMATION ALGORITHM}
\label{subsec:subquadalgo}

We can now state and prove the main result of this work.

\IncMargin{1.5em}
\begin{algorithm}
\DontPrintSemicolon
\caption{$(10,2+\varepsilon)$-ApproxFairCenter\label{alg:(10,2+eps)-approx}}
\KwIn{Number of centers $k \in \NN$, fairness parameter $\alpha \in \RR_{> 0}$, failure probability $\delta \in (0,1)$}
\KwOut{Subset of $P$ of size at most $k$, or failure message}

\If{$k>n/6$ \textbf{or} $k^2/\varepsilon > n^2 \log n$} {\Return $(2,2)$-ApproxFairCenter$(P,k,\alpha)$}

$\tilde{r} \gets \text{ApproxFairRadii}(k,\delta)$ 

\If{$\tilde{r} = \text{Fail}$}{
    \Return\emph{Fail}
}

$L \gets \text{CostCandidate}(k,\varepsilon/2)$

Binary search over $L$ to find smallest $\Delta$ s.t.
$|FairCenter(k,\alpha,\Delta,\tilde{r})| \leq k$ 

\If{$\Delta$ does not exist}{
\Return \emph {no feasible solution exists}
}
\Else{
\Return $FairCenter(k,\alpha,\Delta,r_k)$
}

\end{algorithm}
\DecMargin{1.5em}

\begin{theorem}\label{thm: subquadratic time}
    Given parameters $k \in \NN$, $\alpha, \varepsilon \in \RR_{>0}$ and $\delta \in (0,1)$. 
    If there exists a feasible solution for $\alpha$-fair $k$-center Algorithm \ref{alg:(10,2+eps)-approx} computes with probability at least $1-\delta$ a $(10,2+\varepsilon)$-approximation for the $\alpha$-fair $k$-center problem.
    The running time of the algorithm is in $O(kn \log(n/\delta) +k^2/\varepsilon)$ time. 
\end{theorem}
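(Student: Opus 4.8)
The plan is to condition on the event that the call $\tilde r\gets\text{ApproxFairRadii}(k,\delta)$ succeeds, which by Lemma~\ref{lem: approx Fair Radii} has probability at least $1-\delta$, and then to show that in this event Algorithm~\ref{alg:(10,2+eps)-approx} does essentially what Algorithm~\ref{alg:(2,2)-approx} does, only with the exact fair radii $r_k$ replaced by the $5$-approximations $\tilde r$ and the quadratic-size candidate set $D$ replaced by the short list $L$ produced by $\text{CostCandidates}$ (Algorithm~\ref{algo: CostCandidates}). First handle the guard in Line~1: if $k>n/6$ or $k^2/\varepsilon>n^2\log n$, the algorithm returns the output of Algorithm~\ref{alg:(2,2)-approx}, which by Theorem~\ref{thm: result discrete} is a $(2,2)$-approximation (hence certainly a $(10,2+\varepsilon)$-approximation) computed in $O(n^2+kn\log n)$ time, and one only has to observe that in this parameter regime $O(n^2+kn\log n)=O(kn\log(n/\delta)+k^2/\varepsilon)$: in the first subcase $n<6k$, so $n^2<6kn$; in the second, $n^2\le n^2\log n<k^2/\varepsilon$; and in both $kn\log n\le kn\log(n/\delta)$ since $\delta<1$. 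From now on assume $k\le n/6$ and $k^2/\varepsilon\le n^2\log n$.

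\textbf{Correctness in the main case.} Work inside the success event, so $r_k(p)\le\tilde r(p)\le 5r_k(p)$ for all $p\in P$ and $\tilde r\ne\text{Fail}$ (the latter holds in this event by the proof of Lemma~\ref{lem: approx Fair Radii}). Since a feasible solution exists, $\Delta^*_\alpha<\infty$, so Lemma~\ref{lem: CostCandidates} applied with precision $\varepsilon/2$ yields some $\ell\in L$ with $\Delta^*_\alpha\le\ell\le(1+\varepsilon/2)\Delta^*_\alpha$. The optimal solution is feasible with cost at most $\ell$ and $\tilde r\ge r_k$ pointwise, so Lemma~\ref{bound number centers} (with $r=\tilde r$, $\Delta=\ell$) gives $|\text{FairCenter}(k,\alpha,\ell,\tilde r)|\le k$; hence the search over $L$ does not report ``no feasible solution'', and — run exactly as the binary search in the proof of Theorem~\ref{thm: result discrete} — it returns a value $\Delta\le\ell\le(1+\varepsilon/2)\Delta^*_\alpha$ passing the cardinality test. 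Let $C=\text{FairCenter}(k,\alpha,\Delta,\tilde r)$ be the returned set (with the radius function being $\tilde r$), padded to size exactly $k$ with arbitrary points if necessary as in Theorem~\ref{thm: result discrete}. Then $|C|\le k$, and Line~4 of Algorithm~\ref{alg:FairCenter} guarantees $d(p,C)\le 2\min\{\alpha\tilde r(p),\Delta\}$ for every $p\in P$. Bounding the minimum by $\alpha\tilde r(p)\le 5\alpha r_k(p)$ shows $d(p,C)\le 10\alpha r_k(p)$, i.e.\ $C$ is $(10\alpha,k)$-fair; bounding it by $\Delta$ shows $\cost(P,C)\le 2\Delta\le(2+\varepsilon)\Delta^*_\alpha$. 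This establishes both requirements of a $(10,2+\varepsilon)$-approximation, and the only randomness used was in $\text{ApproxFairRadii}$, so the success probability is at least $1-\delta$.

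\textbf{Running time in the main case.} Sum three pieces. $\text{ApproxFairRadii}(k,\delta)$ runs in $O(kn\log(n/\delta))$ by Lemma~\ref{lem: approx Fair Radii}. $\text{CostCandidates}(k,\varepsilon/2)$ runs in $O(kn+k^2/\varepsilon)$ and returns a list of size $|L|=O(k^2/\varepsilon)$ by Lemma~\ref{lem: CostCandidates}. Sort $P$ once by $\tilde r$ in $O(n\log n)$ time, so every subsequent call to $\text{FairCenter}$ costs $O(nk)$ when it is stopped as soon as more than $k$ centers are selected; the search then proceeds as in Theorem~\ref{thm: result discrete}, spending $O(|L'|)$ per round to find the median of the remaining candidate set $L'$ (whose size halves each round) plus one $\text{FairCenter}$ call, for a total of $O(|L|+nk\log|L|)$. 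Because $|L|=O(k^2/\varepsilon)=O(n^2\log n)$ in this case, $\log|L|=O(\log n)=O(\log(n/\delta))$, so the search and the final $\text{FairCenter}$ call together cost $O(k^2/\varepsilon+nk\log(n/\delta))$. Adding the three contributions, and using $n\log n=O(kn\log(n/\delta))$, gives the claimed $O(kn\log(n/\delta)+k^2/\varepsilon)$.

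\textbf{Main obstacle.} The point needing the most care is the interplay of the two approximations introduced in this section: one genuinely needs \emph{both} inequalities from Lemma~\ref{lem: approx Fair Radii}. The bound $\tilde r\ge r_k$ is what lets Lemma~\ref{bound number centers} certify that $\text{FairCenter}$ with a cost guess $\ell\ge\Delta^*_\alpha$ produces at most $k$ centers (so that the search is guaranteed to find a $\Delta\le\ell$, which is what yields the cost factor), while $\tilde r\le 5r_k$ is what turns the built-in guarantee $d(p,C)\le 2\alpha\tilde r(p)$ into the fairness factor $10$. The second delicate point is purely bookkeeping: the deterministic fallback in Line~1 is exactly what makes $k\le n/6$ (needed for $\text{ApproxFairRadii}$ in the first place) and $k^2/\varepsilon\le n^2\log n$ available in the main case, and the latter is what absorbs both the $O(n\log n)$ sort and the $\log|L|$ factor of the search into the target running time. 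Everything else is a routine transcription of the analysis of Theorem~\ref{thm: result discrete} with $r_k$ and $D$ replaced by $\tilde r$ and $L$.
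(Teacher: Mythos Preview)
Your proof is correct and follows essentially the same approach as the paper's: condition on the success of \text{ApproxFairRadii}, use Lemma~\ref{lem: CostCandidates} to locate a good $\ell\in L$, invoke Lemma~\ref{bound number centers} with $r=\tilde r\ge r_k$ to guarantee the binary search terminates at some $\Delta\le\ell$, and read off the two approximation factors from Line~4 of \text{FairCenter}. If anything, your treatment is more complete than the paper's: you explicitly verify that the deterministic fallback in Line~1 meets the stated running-time bound (the paper simply defers to Theorem~\ref{thm: result discrete} without checking this), and you spell out how the guard $k^2/\varepsilon\le n^2\log n$ is used to absorb the $O(n\log n)$ sort and the $\log|L|$ factor into $O(kn\log(n/\delta))$.
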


\begin{proof}
Assume that $k\leq n/6$ and $k^2/\varepsilon \leq n^2 \log n$, otherwise the proof follows from Theorem \ref{thm: result discrete}.

We first execute ApproxFairRadii$(k,\delta)$. By  Lemma \ref{lem: approx Fair Radii}, ApproxFairRadii$(k,\delta)$ outputs values $\tilde{r}(p)$ such that $r_k(p) \leq \tilde{r}(p) \leq 5r_k(p)$ for all $p \in P$ with probability at least $1-\delta$.

Next we execute CostCandidate$(k,\varepsilon/2)$. By Lemma \ref{lem: CostCandidates}, the set $L$ computed by CostCandidates$(k,\varepsilon/2)$ contains $\Delta \in L$ with $cost(P,C^*) \leq \Delta \leq (1+\varepsilon/2)\cost(P,C^*)$, where $C^*$ is an optimal solution for $\alpha$-fair $k$-center. 

Let $C$ be the solution returned by FairCenter$(k,\alpha,\Delta',\tilde{r})$ for an arbitrary $\Delta'\ge \Delta$. By Lemma \ref{bound number centers}, we know that in this case $|C| \leq k$. Thus, the binary search stops with a value $\Delta' \le \Delta$.
The if-clause in Line $4$ of Algorithm FairCenter ensures that 
\begin{align*}
    d(p,C) &\leq 2 \min\{\alpha \tilde{r}(p),\Delta'\} \\
    &\leq 2 \min\{\alpha 5r_k(p),(1+\varepsilon/2) cost(P,C^*)\}
\end{align*}
 This implies that $C$ is a $(10,2+\varepsilon)$-approximation for $\alpha$-fair $k$-center.
 
 Finally, we analyze the running time of this procedure. By Lemma \ref{lem: fairsampling}, ApproxFairRadii$(k,\delta)$ takes $O(kn\log(n / \delta))$ time. By Lemma \ref{lem: CostCandidates}, CostCandidates$(k,\varepsilon/2)$ requires $O(kn + k^2/\varepsilon)$ time. Sorting $P$ takes $O(n\log n)$ time. 
 Analogous to Algorithm \ref{alg:(2,2)-approx} one can implement the binary search to have running time $O(k^2/\varepsilon + kn \log(k/\varepsilon))$. 
 Under our assumptions the procedure takes $O(kn\log(n / \delta)+ k^2/\varepsilon)$ time in total.
\end{proof}

\bibliographystyle{elsarticle-num} 
\bibliography{biblio}

\end{document}